\documentclass[a4paper,notitlepage]{article}
%%%%%%%%%%%%%%%%%%%%%%%%%%%%%%%%%%%%%%%%%%%%%%%%%%%%%%%%%%%%%%%%%%%%%%%%%%%%%%%%%%%%%%%%%%%%%%%%%%%%%%%%%%%%%%%%%%%%%%%%%%%%%%%%%%%%%%%%%%%%%%%%%%%%%%%%%%%%%%%%%%%%%%%%%%%%%%%%%%%%%%%%%%%%%%%%%%%%%%%%%%%%%%%%%%%%%%%%%%%%%%%%%%%%%%%%%%%%%%%%%%%%%%%%%%%%
\usepackage{amssymb}

%TCIDATA{OutputFilter=Latex.dll}
%TCIDATA{Version=5.50.0.2953}
%TCIDATA{<META NAME="SaveForMode" CONTENT="1">}
%TCIDATA{BibliographyScheme=Manual}
%TCIDATA{LastRevised=Friday, May 11, 2012 13:32:21}
%TCIDATA{<META NAME="GraphicsSave" CONTENT="32">}

\newtheorem{theorem}{Theorem}

\newtheorem{corollary}[theorem]{Corollary}

\newtheorem{proposition}[theorem]{Proposition}

\newenvironment{proof}[1][Proof]{\noindent\textbf{#1.} }{\ \rule{0.5em}{0.5em}}
\topmargin=-1cm \textwidth=17cm \textheight=23cm \evensidemargin=0cm
\oddsidemargin=0cm

\topmargin=-1cm
\textwidth=17cm
\textheight=23cm
\evensidemargin=0cm
\oddsidemargin=0cm

\begin{document}

\title{Autonomous three dimensional Newtonian systems which admit Lie and
Noether point symmetries}
\author{Michael Tsamparlis\thanks{%
Email: mtsampa@phys.uoa.gr}, Andronikos Paliathanasis\thanks{%
Email: anpaliat@phys.uoa.gr}, Leonidas Karpathopoulos\thanks{%
Email: lkarpathopoulos@gmail.com} \\
%EndAName
{\small \textit{Faculty of Physics, Department of Astronomy - Astrophysics -
Mechanics,}}\\
{\small \textit{\ University of Athens, Panepistemiopolis, Athens 157 83,
GREECE}}}
\maketitle

\begin{abstract}
We determine the autonomous three dimensional \ Newtonian systems which
admit Lie point symmetries and the three dimensional autonomous Newtonian
Hamiltonian systems, which admit Noether point symmetries. We apply the
results in order to determine the two dimensional Hamiltonian dynamical
systems which move in a space of constant non-vanishing curvature and are
integrable via Noether point symmetries. The derivation of the results is
geometric and can be extended naturally to higher dimensions.
\end{abstract}

\qquad Keywords: Newtonian systems, Lie point symmetries, Spaces of constant
curvature, Noether point symmetries

PACS - numbers: 45.20.D-, 02.20.Sv, 02.40.Dr

\section{Introduction}

The Lie and Noether point symmetries of the equations of motion of a
dynamical system provide a systematic method for the determination of
invariants and first integrals (see \cite{Kaushal 1998} for a review). In a
recent work \cite{Tsam10} we have determined the autonomous two dimensional
Newtonian systems which admit Lie and Noether point symmetries. In the
present work we extend this study to the autonomous three dimensional
Newtonian systems. That is, we consider the equations of motion%
\begin{equation}
\ddot{x}^{\mu }=F^{\mu }\left( x^{\nu }\right) ~,~\mu =1,2,3  \label{L2p.1}
\end{equation}%
and compute the form of the functions \textit{\ }$F^{\mu }\left( x^{\nu
}\right) $\ for which (\ref{L2p.1}) admits Lie point symmetries\textit{\ }%
(in addition to the trivial one $\partial _{t}$)\textit{. }

Subsequently we assume the system to be Hamiltonian with Lagrangian
\begin{equation}
L\left( x^{\mu },\dot{x}^{\nu }\right) =\frac{1}{2}\delta _{\mu \nu }\dot{x}%
^{\mu }\dot{x}^{\nu }-V\left( x^{\mu }\right)
\end{equation}%
where $\delta _{\mu \nu }$ is the Euclidian 3d metric and $V\left( x^{\mu
}\right) $ is the potential function and determine the potential functions $%
V\left( x^{\mu }\right) $ for which the Lagrangian admits at least one Lie
or Noether point symmetry (in addition to the trivial $\partial _{t}$).
Because the derivation is based solely on geometric arguments the results
can be generalized in a straightforward manner in $E^{n}.$

Using the fact that a space of constant curvature of dimension $n-1$ can be
embedded in a flat space of dimension $n,$ we apply the results in $E^{3}$
in order to determine the dynamical systems which move in a two dimensional
space of constant non-vanishing curvature and are Liouville integrable via
Noether point symmetries.

The structure of the paper is as follows. In section \ref{Collineations of
Riemannian spaces} we give the basic definitions concerning the
collineations in a Riemannian space. In section \ref{Lie and Noether point
symmetries versus Collineations} we present two theorems which relate the
Lie and the Noether point symmetry algebras of the equations of motion of a
dynamical system moving in an $n-$ dimensional Riemannian space with the
projective and the homothetic algebra of the space respectively. In section %
\ref{Lie point symmetries of three dimensional autonomous Newtonian systems}
we determine the autonomous Newtonian systems which admit Lie point
symmetries. In section \ref{Noether point symmetries1} we determine the
subset of the systems which admit Noether point symmetries. In section \ref%
{Motion on the two dimensional sphere} we apply the results to determine the
Newtonian Hamiltonian dynamical systems which move in a two dimensional
space of constant non-vanishing curvature and admit Noether point
symmetries. Finally in section\ \ref{Conclusion} we draw our conclusions.

\section{Collineations of Riemannian spaces}

\label{Collineations of Riemannian spaces}

A\ collineation in a Riemannian space is a vector field $\mathbf{X}$ which
satisfies an equation of the form%
\begin{equation}
\mathcal{L}_{X}\mathbf{A=B}  \label{L2p.2}
\end{equation}%
where $\mathcal{L}_{X}$ denotes Lie derivative \cite{Yano}, $\mathbf{A}$ is
a geometric object (not necessarily a tensor)\ defined in terms of the
metric and its derivatives (e.g. connection coefficients, Ricci tensor,
curvature tensor etc.) and $\mathbf{B}$ is an arbitrary tensor with the same
tensor indices as $\mathbf{A}$. The collineations in a Riemannian space have
been classified by Katzin et al. \cite{Katzin}. In the following we use only
certain collineations.

A conformal Killing vector (CKV)\ is defined by the relation
\begin{equation}
\mathcal{L}_{X}g_{ij}=2\psi \left( x^{k}\right) g_{ij}.
\end{equation}%
If $\psi =0,$ $\mathbf{X}\ $is called a Killing vector (KV), if $\psi $ is a
non-vanishing constant $\mathbf{X}$ is a homothetic vector (HV) and if $\psi
_{;ij}=0,~\mathbf{X}$~is called a special conformal Killing vector (SCKV). A
CKV is called proper if it is not a KV, HV or a SCKV.

A Projective collineation (PC) is defined by the equation
\begin{equation}
\mathcal{L}_{X}\Gamma _{jk}^{i}=2\phi _{(,j}\delta _{k)}^{i}.
\end{equation}%
If $\phi =0$ the PC\ is called an affine collineation (AC) and if $\phi
_{;ij}=0$ a special projective collineation (SPC). A\ proper PC\ is a PC\
which is not an AC, HV or KV or SPC. The PCs form a Lie algebra whose ACs,
HVs and KVs are subalgebras. It has been shown that if a metric admits a
SCKV then also admits a SPC, a gradient HV and a gradient KV \cite{HallR}.

In the following we shall need the symmetry algebra of spaces of constant
curvature. In \cite{Barnes} it has been shown that the PCs of a space of
constant non-vanishing curvature consist of proper PCs and KVs only and if
the space is flat then the algebra of the PCs consists of KVs/HV/ACs and
SPCs. Note that the algebra of KVs is common in both cases.

\section{Lie and Noether point symmetries versus Collineations}

\label{Lie and Noether point symmetries versus Collineations}

We review briefly the basic definitions concerning Lie and Noether point
symmetries of systems of second order ordinary differential equations (ODEs)%
\begin{equation}
\ddot{x}^{i}=\omega ^{i}\left( t,x^{j},\dot{x}^{j}\right) .  \label{Lie.0}
\end{equation}

A\ vector field $X=\xi \left( t,x^{j}\right) \partial _{t}+\eta ^{i}\left(
t,x^{j}\right) \partial _{i}$\ in the augmented space $\{t,x^{i}\}$ is the
generator of a Lie point symmetry of the system of ODEs (\ref{Lie.0})\ if
the following condition is satisfied \cite{Olver}
\begin{equation}
X^{\left[ 2\right] }\left( \ddot{x}^{i}-\omega ^{i}\left( t,x^{j},\dot{x}%
^{j}\right) \right) =0  \label{Lie.1}
\end{equation}%
where $X^{\left[ 2\right] }$\ is the second prolongation of $X$ defined by
the formula%
\begin{equation}
X^{\left[ 2\right] }=\xi \partial _{t}+\eta ^{i}\partial _{i}+\left( \dot{%
\eta}^{i}-\dot{x}^{i}\dot{\xi}\right) \partial _{\dot{x}^{i}}+\left( \ddot{%
\eta}^{i}-\dot{x}^{i}\ddot{\xi}-2\ddot{x}^{i}\dot{\xi}\right) \partial _{%
\ddot{x}^{i}}.  \label{Lie.2}
\end{equation}%
Condition (\ref{Lie.1}) is equivalent to the condition \cite{StephaniB}
\begin{equation}
\left[ X^{\left[ 1\right] },A\right] =\lambda \left( x^{a}\right) A
\label{Lie.3}
\end{equation}%
where $X^{\left[ 1\right] }$ is the first prolongation of $X$ and $A$ is the
Hamiltonian vector field%
\begin{equation}
A=\partial _{t}+\dot{x}\partial _{x}+\omega ^{i}\left( t,x^{j},\dot{x}%
^{j}\right) \partial _{\dot{x}^{i}}.  \label{Lie.4}
\end{equation}

If the system of ODEs results from a first order Lagrangian $L=L\left(
t,x^{j},\dot{x}^{j}\right) ,$\ then a Lie symmetry $X$\ is a Noether
symmetry of the Lagrangian, if the additional condition is satisfied
\begin{equation}
X^{\left[ 1\right] }L+L\frac{d\xi }{dt}=\frac{dG}{dt}  \label{Lie.5}
\end{equation}%
where $G=G\left( t,x^{j}\right) $\ is the Noether gauge function. To every
Noether symmetry there corresponds a first integral (a Noether integral)$~$%
\cite{StephaniB} of the system of equations (\ref{Lie.0}) which is given by
the formula%
\begin{equation}
I=\xi E-\frac{\partial L}{\partial \dot{x}^{i}}\eta ^{i}+G  \label{Lie.6}
\end{equation}%
where $E$ \ is the Hamiltonian.

Using the standard Lie method the Lie point symmetry conditions~(\ref{Lie.1}%
) for the second order system%
\begin{equation}
\ddot{x}^{i}+\Gamma _{jk}^{i}\dot{x}^{j}\dot{x}^{k}+F^{i}\left( x^{j}\right)
=0  \label{PP.01}
\end{equation}%
are computed in the following geometric form\footnote{%
The use of an algebraic computing program (e.g. Lie) does not reveal
directly the Lie symmetry conditions in this geometric form. The "solution"\
of these conditions is given in \cite{Tsam10}. For the convenience of the
reader we repeat this solution in concise form.}%
\begin{equation}
L_{\eta }F^{i}+2\xi ,_{t}F^{i}+\eta ^{i},_{tt}=0  \label{PLS.09}
\end{equation}%
\begin{equation}
\left( \xi ,_{k}\delta _{j}^{i}+2\xi ,_{j}\delta _{k}^{i}\right) F^{k}+2\eta
^{i},_{t|j}-\xi ,_{tt}\delta _{j}^{i}=0  \label{PLS.10}
\end{equation}%
\begin{equation}
L_{\eta }\Gamma _{(jk)}^{i}=2\xi ,_{t(j}\delta _{k)}^{i}  \label{PLS.11}
\end{equation}%
\begin{equation}
\xi _{(,i|j}\delta _{r)}^{k}=0.  \label{PLS.12}
\end{equation}

Equation (\ref{PLS.12}) means that $\xi _{,j}$ is a gradient Killing vector
(KV) of $g_{ij}.$ Equation (\ref{PLS.11}) means that $\eta ^{i}$ is a
projective collineation of the metric with projective function $\xi _{,t}.$
The remaining two equations are the constraint conditions, which relate the
components $\xi ,n^{i}$ of the Lie point symmetry vector with the vector $%
F^{i}$. Equation (\ref{PLS.09}) gives%
\begin{equation}
\left( L_{\eta }g^{ij}\right) F_{j}+g^{ij}L_{\eta }F_{j}+2\xi
_{,t}g^{ij}F_{j}+\eta _{,tt}^{i}=0.  \label{de.21a}
\end{equation}%
This equation restricts $\eta ^{i}$ further because it relates it directly
to the metric symmetries. Finally equation (\ref{PLS.10}) gives%
\begin{equation}
-\delta _{j}^{i}\xi _{,tt}+\left( \xi _{,j}\delta _{k}^{i}+2\delta
_{j}^{i}\xi _{,k}\right) F^{k}+2\eta _{,tj}^{i}+2\Gamma _{jk}^{i}\eta
_{,t}^{k}=0.  \label{de.21d}
\end{equation}

We conclude that the Lie symmetry equations are equations (\ref{de.21a}) ,(%
\ref{de.21d}) where $\xi (t,x)$ is a gradient KV of the metric $g_{ij}$ and $%
\eta ^{i}\left( t,x\right) $ is a special projective collineation of the
metric $g_{ij}$ with projective function $\xi _{,t}$. The above lead to the
following Theorem which relates the Lie point symmetries of an autonomous
dynamical system 'moving' in a Riemannian space with the collineations of
the space\footnote{%
This theorem contains various cases which can be found in the detailed
version of the theorem given in \cite{Tsam10}. It is important for the
comprehension of the present paper that the reader will consult the detailed
version of the theorem.}.

\begin{theorem}
\label{The general conservative system} The Lie point symmetries of the
equations of motion of an autonomous dynamical system moving $~$in a
Riemannian space with metric $g_{ij}$ (of any signature) under the action of
the force $F^{i}(x^{j})$ (\ref{PP.01})~are given in terms of the generators $%
Y^{i}$ of the special projective Lie algebra of the metric $g_{ij}$.
\end{theorem}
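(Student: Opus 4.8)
The plan is to take as given the reduction of the Lie symmetry condition (\ref{Lie.1}) to the geometric system (\ref{PLS.09})--(\ref{PLS.12}), equivalently the constraint equations (\ref{de.21a}), (\ref{de.21d}) together with (\ref{PLS.11}) and (\ref{PLS.12}); this reduction is already carried out in the text above. What then remains is to read off from these equations that the spatial components $\eta ^{i}(t,x)$ of any Lie point symmetry must lie in the special projective algebra of $g_{ij}$, while the two constraint equations fix the admissible $t$-dependence and the allowed forces $F^{i}$.

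The core of the argument is purely geometric and rests on the autonomy of the system. Equation (\ref{PLS.12}) states that the spatial gradient $\xi _{,i}$ is a gradient KV of $g_{ij}$; since its potential is $\xi $ itself, this is equivalent to $\xi _{;ij}=0$, where the semicolon denotes covariant differentiation with the Levi-Civita connection of $g_{ij}$. Equation (\ref{PLS.11}) states that, for each fixed $t$, the field $\eta ^{i}$ is a projective collineation of $g_{ij}$ with projective function $\phi =\xi _{,t}$. Because $g_{ij}$, and hence $\Gamma ^{i}_{jk}$ and covariant differentiation, do not depend on $t$, I may commute $\partial _{t}$ with the spatial covariant derivatives: applying $\partial _{t}$ to $\xi _{;ij}=0$ gives $\phi _{;ij}=(\xi _{,t})_{;ij}=0$. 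By the definition recalled in Section \ref{Collineations of Riemannian spaces}, a projective collineation with $\phi _{;ij}=0$ is a special projective collineation. Hence every admissible $\eta ^{i}$ is a special projective collineation of $g_{ij}$, i.e.\ an element of the special projective algebra, which is the assertion of the theorem.

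To make the dependence on the generators explicit I would then invoke finite-dimensionality. The special projective collineations form a finite-dimensional Lie algebra with a fixed basis $\{Y^{i}_{(A)}\}$; since $\eta ^{i}(t,\cdot)$ is such a collineation for every $t$, it expands as $\eta ^{i}(t,x)=\sum _{A}c_{A}(t)\,Y^{i}_{(A)}(x)$ with smooth coefficients $c_{A}(t)$, while $\xi _{,i}$ being a gradient KV forces $\xi (t,x)=T(t)+\sum _{B}d_{B}(t)\,p_{B}(x)$, the $p_{B}$ being the potentials of the gradient KVs. Feeding these forms back into the remaining constraints (\ref{de.21a}) and (\ref{de.21d}) then plays a double role: it determines the time profiles $c_{A}(t)$ and $d_{B}(t)$, and, for a prescribed force $F^{i}$, it selects which of the generators $Y^{i}_{(A)}$ actually yield symmetries. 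This establishes the stated form: all Lie point symmetries are built from the generators $Y^{i}$ of the special projective algebra, with $t$-dependent coefficients governed by the force through the constraint equations.

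The main obstacle is not a single hard computation but the case analysis lurking behind the phrase ``given in terms of the generators of the special projective algebra''. One must verify that no proper projective collineation can survive (ruled out precisely because only $\phi _{;ij}=0$ is admitted), and then organise the surviving generators into the subcases KV, gradient KV, HV, AC and proper SPC, checking against the known symmetry structure of flat and constant-curvature spaces recalled from \cite{Barnes} and \cite{HallR}. Keeping this bookkeeping consistent with (\ref{de.21a}) and (\ref{de.21d}), so that every piece of $\eta ^{i}$ and $\xi $ is accounted for and nothing outside the special projective algebra appears, is where the real work of the detailed version in \cite{Tsam10} lies.
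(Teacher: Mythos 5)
Your proposal is correct and takes essentially the same route as the paper: both reduce the Lie symmetry condition to the geometric system (\ref{PLS.09})--(\ref{PLS.12}), identify $\xi_{,i}$ as a gradient KV and $\eta^{i}$ as a projective collineation with projective function $\xi_{,t}$, and then let the constraint equations (\ref{de.21a}), (\ref{de.21d}) fix the time dependence and the admissible forces, deferring the full case analysis to \cite{Tsam10}. Your explicit commutation argument showing $(\xi_{,t})_{;ij}=0$, hence that the collineation is special, simply spells out a step the paper asserts without writing it down.
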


If the force $F^{i}$ is derivable from a potential $V(x^{i}),$ so that the
equations of motion follow from the standard Lagrangian
\begin{equation}
L\left( x^{j},\dot{x}^{j}\right) =\frac{1}{2}g_{ij}\dot{x}^{i}\dot{x}%
^{j}-V\left( x^{j}\right)  \label{NPC.02}
\end{equation}%
with Hamiltonian%
\begin{equation}
E=\frac{1}{2}g_{ij}\dot{x}^{i}\dot{x}^{j}+V\left( x^{j}\right)
\label{NPC.02a}
\end{equation}%
then the Noether point conditions (\ref{Lie.5}) for the Lagrangian (\ref%
{NPC.02}) are
\begin{eqnarray}
V_{,k}\eta ^{k}+V\xi _{,t} &=&-f_{,t}  \label{NSCS.4} \\
\eta _{,t}^{i}g_{ij}-\xi _{,j}V &=&f_{,j}  \label{NSCS.5} \\
L_{\eta }g_{ij} &=&2\left( \frac{1}{2}\xi _{,t}\right) g_{ij}  \label{NSCS.6}
\\
\xi _{,k} &=&0.  \label{NSCS.7}
\end{eqnarray}

Equation (\ref{NSCS.7}) implies $\xi =\xi \left( t\right) $ and reduces the
system as follows%
\begin{eqnarray}
L_{\eta }g_{ij} &=&2\left( \frac{1}{2}\xi _{,t}\right) g_{ij}  \label{NSCS.8}
\\
V_{,k}\eta ^{k}+V\xi _{,t} &=&-f_{,t}  \label{NSCS.9} \\
\eta _{i,t} &=&f_{,i}.  \label{NSCS.10}
\end{eqnarray}

Equation (\ref{NSCS.8}) implies that $\eta ^{i}$ is a conformal Killing
vector of the metric provided $\xi _{,t}\neq 0.$ Because $g_{ij}$\ is
independent of $t$\ and $\xi =\xi \left( t\right) $\ the $\eta ^{i}$\ must
be is a HV of the metric. This means that $\eta ^{i}\left( t,x\right)
=T\left( t\right) Y^{i}\left( x^{j}\right) $\ where $Y^{i}$\ is a HV. If $%
\xi _{,t}=0$ then $\eta ^{i}$ is a Killing vector of the metric. Equations (%
\ref{NSCS.9}), (\ref{NSCS.10}) are the constraint conditions, which the
Noether symmetry and the potential must satisfy for former to be admitted.
These lead to the following theorem\footnote{%
The detailed version of this theorem is given in \cite{Tsam10}.}

\begin{theorem}
\label{The Noether Theorem}The Noether point symmetries of the Lagrangian (%
\ref{NPC.02}) are generated from the homothetic algebra of the metric $%
g_{ij} $.
\end{theorem}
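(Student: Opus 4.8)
The plan is to start from the already-reduced Noether conditions (\ref{NSCS.8})--(\ref{NSCS.10}), which follow from (\ref{Lie.5}) once (\ref{NSCS.7}) has forced $\xi=\xi(t)$, and to read off from them the geometric nature of the spatial part $\eta^{i}$. The decisive equation is (\ref{NSCS.8}), $L_{\eta}g_{ij}=\xi_{,t}g_{ij}$, which states that for each frozen value of $t$ the field $\eta^{i}(t,\cdot)$ is a conformal Killing vector of $g_{ij}$ with conformal factor $\psi=\frac{1}{2}\xi_{,t}$. First I would observe that, because $g_{ij}$ carries no $t$-dependence while $\xi=\xi(t)$, this conformal factor is constant over the space (a function of $t$ alone). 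By the definitions recalled in Section \ref{Collineations of Riemannian spaces}, a conformal Killing vector whose factor is a non-vanishing constant is a homothetic vector, and one with vanishing factor is a Killing vector. Hence at every $t$ the vector $\eta^{i}(t,\cdot)$ lies in the homothetic algebra of $g_{ij}$, which already establishes the assertion in its qualitative form.

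Next I would make the $t$-dependence explicit. Since the homothetic algebra is finite dimensional and contains at most one proper homothety $Y^{i}$ (of constant homothetic factor) together with the Killing subalgebra, matching the spatially constant homothetic factor of $\eta^{i}(t,\cdot)$ to $\frac{1}{2}\xi_{,t}(t)$ peels off a single scalar function of time and yields the separable form $\eta^{i}(t,x)=T(t)Y^{i}(x)$, modulo a possibly $t$-dependent Killing contribution, with $\xi_{,t}$ proportional to $T(t)$. When $\xi_{,t}=0$ the homothetic component drops out and $\eta^{i}$ reduces to a Killing vector, recovering the isometry-generated Noether symmetries.

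With $\eta^{i}$ thus identified, the two remaining relations play only a subsidiary role and I would dispose of them last. Equation (\ref{NSCS.10}), $\eta_{i,t}=f_{,i}$, is the integrability condition fixing the gauge function $f$; it is solvable exactly when $\eta_{i,t}$ is an exact one-form, which singles out the gradient members of the homothetic algebra whenever $T$ is non-constant. Equation (\ref{NSCS.9}) then reduces to a constraint on the potential $V$ alone and does not bear on the identification of $\eta^{i}$: it merely selects which potentials admit the symmetry.

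I expect the main obstacle to be the middle step, namely the rigorous passage from the pointwise-in-$t$ fact that each $\eta^{i}(t,\cdot)$ is a homothetic vector to the global representation $\eta^{i}=T(t)Y^{i}$. This rests on the rigidity of the homothetic algebra --- its finite dimension and the uniqueness, up to Killing vectors and rescaling, of a proper homothety --- which guarantees that the spatially constant factor $\frac{1}{2}\xi_{,t}(t)$ can be absorbed into one time function without spoiling the separation of variables. The full case-by-case bookkeeping of the gradient conditions and of the Killing contributions is precisely the content of the detailed version in \cite{Tsam10}.
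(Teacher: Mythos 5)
Your proposal is correct and follows essentially the same route as the paper: it reduces to conditions (\ref{NSCS.8})--(\ref{NSCS.10}), reads off from (\ref{NSCS.8}) that $\eta^{i}$ is a conformal Killing vector whose factor $\frac{1}{2}\xi_{,t}$ is spatially constant (hence a HV, or a KV when $\xi_{,t}=0$), obtains the separable form $\eta^{i}=T(t)Y^{i}(x)$, and treats (\ref{NSCS.9}), (\ref{NSCS.10}) as the constraint conditions selecting the potential and the gradient members of the algebra. Your added remarks on the finite-dimensionality of the homothetic algebra and the exactness of $\eta_{i,t}$ only make explicit steps the paper asserts without comment.
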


More specifically, concerning the Noether symmetries, we have the following
\cite{Tsam10}:

All autonomous systems admit the Noether symmetry $\partial _{t}~$whose
Noether integral is the Hamiltonian~$E$. For the rest of the Noether
symmetries we consider the following cases

\textbf{Case I }\ Noether point symmetries generated by the homothetic
algebra.

The Noether symmetry vector and the Noether function $G\left( t,x^{k}\right)
$ are%
\begin{equation}
\mathbf{X}=2\psi _{Y}t\partial _{t}+Y^{i}\partial _{i}~,~G\left(
t,x^{k}\right) =pt  \label{NPC.03}
\end{equation}%
where $\psi _{Y}$ is the homothetic factor of $Y^{i}~$($\psi _{Y}=0$ for a
KV\ and $1$ for the HV) and $p$ is a constant, provided the potential
satisfies the condition%
\begin{equation}
\mathcal{L}_{Y}V+2\psi _{Y}V+p=0.  \label{NPC.04}
\end{equation}

\textbf{Case II} \ Noether point symmetries generated by the gradient
homothetic Lie algebra i.e. both KVs and the HV are gradient. \

In this case the Noether symmetry vector and the Noether function are%
\begin{equation}
\mathbf{X}=2\psi _{Y}\int T\left( t\right) dt\partial _{t}+T\left( t\right)
H^{i}\partial _{i}~~,~G\left( t,x^{k}\right) =T_{,t}H\left( x^{k}\right)
~+p\int Tdt  \label{NPC.05}
\end{equation}%
where $H^{i}$ is a gradient HV or gradient KV, the function $T(t)$ is
computed from the relation~$~T_{,tt}=mT~\ $where $~m$ is a constant and the
potential satisfies the condition
\begin{equation}
\mathcal{L}_{H}V+2\psi _{Y}V+mH+p=0.  \label{NPC.06}
\end{equation}

Concerning the Noether integrals we have the following result (not including
the Hamiltonian)

\begin{corollary}
\label{The Noether Integrals}The Noether integrals (\ref{Lie.6}) of Case I
and Case II are respectively
\begin{equation}
I_{C_{I}}=2\psi _{Y}tE-g_{ij}Y^{i}\dot{x}^{j}+pt  \label{NPC.07}
\end{equation}%
\begin{equation}
I_{C_{II}}=2\psi _{Y}\int T\left( t\right) dt~E-g_{ij}H^{,i}\dot{x}%
^{j}+T_{,t}H+p\int Tdt.  \label{NPC.08}
\end{equation}%
where $E$ is the Hamiltonian (\ref{NPC.02a}).
\end{corollary}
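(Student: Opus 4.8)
The plan is to prove this purely by direct substitution into the general Noether first integral formula (\ref{Lie.6}), namely $I=\xi E-\frac{\partial L}{\partial \dot{x}^{i}}\eta ^{i}+G$, using the explicit symmetry data already established for Case I and Case II. The one preliminary computation is the canonical momentum of the Lagrangian (\ref{NPC.02}): differentiating $L=\frac{1}{2}g_{ij}\dot{x}^{i}\dot{x}^{j}-V$ and using the symmetry of $g_{ij}$ gives $\frac{\partial L}{\partial \dot{x}^{i}}=g_{ij}\dot{x}^{j}$, so that $\frac{\partial L}{\partial \dot{x}^{i}}\eta ^{i}=g_{ij}\eta ^{i}\dot{x}^{j}$. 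Everything else amounts to reading off $\xi$, $\eta^{i}$ and $G$ from the respective symmetry vectors and carrying the time-dependent factors correctly.

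For Case I I would insert $\xi =2\psi _{Y}t$, $\eta ^{i}=Y^{i}$ and $G=pt$ from (\ref{NPC.03}) into (\ref{Lie.6}). The momentum term becomes $g_{ij}Y^{i}\dot{x}^{j}$ directly, and collecting the three contributions gives $I_{C_{I}}=2\psi _{Y}tE-g_{ij}Y^{i}\dot{x}^{j}+pt$, which is (\ref{NPC.07}). For Case II I would insert $\xi =2\psi _{Y}\int T(t)\,dt$, $\eta ^{i}=T(t)H^{i}$ and $G=T_{,t}H+p\int T\,dt$ from (\ref{NPC.05}). The extra structural input here is that in Case II the generator is built from a \emph{gradient} homothetic (or Killing) vector, so that $H^{i}=H^{,i}=g^{ij}H_{,j}$; the momentum term is then $T\,g_{ij}H^{,i}\dot{x}^{j}$, and assembling the pieces yields the expression (\ref{NPC.08}).

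There is no genuine obstacle to overcome: the statement is a verification rather than a theorem, since the content that such $\xi$, $\eta^{i}$ and $G$ truly satisfy the Noether conditions has already been absorbed into the analysis of Cases I and II above. The only points that require care are keeping the index placement of $g_{ij}$ consistent with the lowered form of $\eta^{i}$, using the gradient property $H^{i}=H^{,i}$ in Case II, and tracking the time factors $t$, $T(t)$ and $\int T\,dt$ faithfully through the substitution; the corollary then simply records the conserved quantities produced by each symmetry.
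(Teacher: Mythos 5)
Your approach is exactly the paper's (implicit) one: the corollary carries no separate proof and is intended as a direct substitution of the Case I and Case II symmetry data (\ref{NPC.03}), (\ref{NPC.05}) into the integral formula (\ref{Lie.6}), using $\partial L/\partial\dot{x}^{i}=g_{ij}\dot{x}^{j}$. Your Case I computation matches (\ref{NPC.07}) verbatim.

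There is, however, one point you glossed over, and it is worth making explicit. In Case II your own substitution gives the momentum term $-T(t)\,g_{ij}H^{,i}\dot{x}^{j}$, since $\eta^{i}=T(t)H^{,i}$; yet you then assert that this "yields the expression (\ref{NPC.08})", which as printed reads $-g_{ij}H^{,i}\dot{x}^{j}$ \emph{without} the factor $T(t)$. These are not the same, and the version with the factor $T$ is the correct one: the quantity printed in (\ref{NPC.08}) is not in general conserved. A quick check makes this concrete. Take the free particle in one dimension ($V=0$, $\psi_{Y}=0$, $p=0$) with the gradient KV $\partial_{x}$, gradient function $H=x$; condition (\ref{NPC.06}) forces $m=0$, so $T=t$ and the symmetry is the Galilean boost $t\partial_{x}$. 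Direct substitution gives
\begin{equation}
I=-T\,g_{ij}H^{,i}\dot{x}^{j}+T_{,t}H=x-t\dot{x},
\end{equation}
which is conserved, whereas the expression without the $T$ factor, $x-\dot{x}$, satisfies $\frac{d}{dt}(x-\dot{x})=\dot{x}\neq 0$. So your derivation is sound, but its honest conclusion is that (\ref{NPC.08}) contains a typographical omission of the factor $T(t)$ in the momentum term; a careful write-up should flag that discrepancy rather than claim exact agreement with the printed formula.
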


We remark that theorems \ref{The general conservative system} and \ref{The
Noether Theorem} do not apply to generalized symmetries\cite{Sarlet,Kalotas}.

\section{Lie point symmetries of three dimensional autonomous Newtonian
systems}

\label{Lie point symmetries of three dimensional autonomous Newtonian
systems}

In this section we determine the forces $F^{\mu }=$ $F^{\mu }\left( x^{\nu
}\right) $ for which the equations of motion (\ref{PP.01}) admit Lie point
symmetries (in addition to the trivial $\partial _{t}).$ To do that we need
the special projective algebra of the Euclidian 3d metric

\begin{equation}
ds_{E}^{2}=dx^{2}+dy^{2}+dz^{2}.
\end{equation}

This algebra consists of 15 vectors\footnote{%
These vectors are not all linearly independent i.e. the HV and the rotations
are linear combinations of the ACs} as follows: Six KVs $\partial _{\mu
}~,~x_{\nu }\partial _{\mu }-x_{\mu }\partial _{\nu }~$ one HV$~R\partial
_{R},~$nine ACs~$x_{\mu }\partial _{\mu }~,~x_{\nu }\partial _{\mu }$ and
three SPCs $x_{\mu }^{2}\partial _{\mu }+x_{\mu }x_{\nu }\partial _{\nu
}+x_{\mu }x_{\sigma }\partial _{\sigma },~$where\footnote{%
If $x_{\mu }=x,~$then~$\left\{ x_{\nu }=y~,~x_{\sigma }=z\right\} ~$or~$%
\left\{ x_{\nu }=z~,~x_{\sigma }=y\right\} $}$~\mu \neq \nu \neq \sigma $ , $%
r_{\left( \mu \nu \right) }^{2}=x_{\mu }^{2}+x_{\nu }^{2},~\theta _{\left(
\mu \nu \right) }=\arctan \left( \frac{x_{\nu }}{x_{\mu }}\right) $ and $%
R,\theta ,\phi $ are spherical coordinates$.$

In the computation of Lie symmetries we consider only the linearly
independent vectors of the special projective group\footnote{%
We do not consider their linear combinations because the resulting Lie
symmetries are too many; on the other hand they can be computed in the
standard way.}.

\subsection{Lie point symmetries for non conservative forces}

In the following tables we list the Lie point symmetries and the functional
dependence of the components of the force for each Case of Theorem \ref{The
general conservative system} (for details concerning Cases A1, A2, A3 see
\cite{Tsam10}).

\begin{center}
Table 1. Case A1: Lie point symmetries generated by the affine
algebra\bigskip

$%
\begin{tabular}{|l|l|l|l|}
\hline
\textbf{Lie symmetry} & $\mathbf{F}_{\mu }\left( x_{\mu },x_{\nu },x_{\sigma
}\right) $ & $\mathbf{F}_{\nu }\left( x_{\mu },x_{\nu },x_{\sigma }\right) $
& $\mathbf{F}_{\sigma }\left( x_{\mu },x_{\nu },x_{\sigma }\right) $ \\
\hline
$\frac{d}{2}t\partial _{t}+\partial _{\mu }$ & $e^{-dx_{\mu }}f\left( x_{\nu
},x_{\sigma }\right) $ & $e^{-dx_{\mu }}g\left( x_{\nu },x_{\sigma }\right) $
& $e^{-dx_{\mu }}h\left( x_{\nu },x_{\sigma }\right) $ \\ \hline
$\frac{d}{2}t\partial _{t}+\partial _{\theta _{\left( \mu \nu \right) }}$ & $%
e^{-d\theta _{\left( \mu \nu \right) }}f\left( r_{\left( \mu \nu \right)
},x_{\sigma }\right) $ & $e^{-d\theta _{\left( \mu \nu \right) }}g\left(
r_{\left( \mu \nu \right) },x_{\sigma }\right) $ & $e^{-d\theta _{\left( \mu
\nu \right) }}h\left( r_{\left( \mu \nu \right) },x_{\sigma }\right) $ \\
\hline
$\frac{d}{2}t\partial _{t}+R\partial _{R}$ & $x_{\mu }^{1-d}f\left( \frac{%
x_{\nu }}{x_{\mu }},\frac{x_{\sigma }}{x_{\mu }}\right) $ & $x_{\mu
}^{1-d}g\left( \frac{x_{\nu }}{x_{\mu }},\frac{x_{\sigma }}{x_{\mu }}\right)
$ & $x_{\mu }^{1-d}h\left( \frac{x_{\nu }}{x_{\mu }},\frac{x_{\sigma }}{%
x_{\mu }}\right) $ \\ \hline
$\frac{d}{2}t\partial _{t}+x_{\mu }\partial _{\mu }$ & $x_{\mu
}^{1-d}f\left( x_{\nu },x_{\sigma }\right) $ & $x_{\mu }^{1-d}g\left( x_{\nu
},x_{\sigma }\right) $ & $x_{\mu }^{1-d}h\left( x_{\nu },x_{\sigma }\right) $
\\ \hline
$\frac{d}{2}t\partial _{t}+x_{\nu }\partial _{\mu }$ & $e^{-d\frac{x_{\mu }}{%
x_{\nu }}}\left[ \frac{x_{\mu }}{x_{\nu }}g\left( x_{\nu },x_{\sigma
}\right) +f\left( x_{\nu },x_{\sigma }\right) \right] $ & $e^{-d\frac{x_{\mu
}}{x_{\nu }}}g\left( x_{\nu },x_{\sigma }\right) $ & $e^{-d\frac{x_{\mu }}{%
x_{\nu }}}h\left( x_{\nu },x_{\sigma }\right) $ \\ \hline
\end{tabular}%
$ \\[0pt]

\bigskip

Table 2. Case A2: Lie point symmetries are generated by the gradient
homothetic algebra \bigskip

$%
\begin{tabular}{|l|l|l|l|}
\hline
\textbf{Lie symmetry} & $\mathbf{F}_{\mu }\left( x_{\mu },x_{\nu },x_{\sigma
}\right) $ & $\mathbf{F}_{\nu }\left( x_{\mu },x_{\nu },x_{\sigma }\right) $
& $\mathbf{F}_{\sigma }\left( x_{\mu },x_{\nu },x_{\sigma }\right) $ \\
\hline
$t\partial _{\mu }$ & $f\left( x_{\nu },x_{\sigma }\right) $ & $g\left(
x_{\nu },x_{\sigma }\right) $ & $h\left( x_{\nu },x_{\sigma }\right) $ \\
\hline
$t^{2}\partial _{t}+tR\partial _{R}$ & $\frac{1}{x_{\mu }^{3}}f\left( \frac{%
x_{\nu }}{x_{\mu }},\frac{x_{\sigma }}{x_{\mu }}\right) $ & $\frac{1}{x_{\mu
}^{3}}g\left( \frac{x_{\nu }}{x_{\mu }},\frac{x_{\sigma }}{x_{\mu }}\right) $
& $\frac{1}{x_{\mu }^{3}}h\left( \frac{x_{\nu }}{x_{\mu }},\frac{x_{\sigma }%
}{x_{\mu }}\right) $ \\ \hline
$e^{\pm t\sqrt{m}}\partial _{\mu }$ & $-mx_{\mu }+f\left( x_{\nu },x_{\sigma
}\right) $ & $g\left( x_{\nu },x_{\sigma }\right) $ & $h\left( x_{\nu
},x_{\sigma }\right) $ \\ \hline
$\frac{1}{\sqrt{m}}e^{\pm t\sqrt{m}}\partial _{t}\pm e^{\pm t\sqrt{m}%
}R\partial _{R}$ & $-\frac{m}{4}x_{\mu }+\frac{1}{x_{\mu }^{3}}f\left( \frac{%
x_{\nu }}{x_{\mu }},\frac{x_{\sigma }}{x_{\mu }}\right) $ & $-\frac{m}{4}%
x_{\nu }+\frac{1}{x_{\mu }^{3}}g\left( \frac{x_{\nu }}{x_{\mu }},\frac{%
x_{\sigma }}{x_{\mu }}\right) $ & $-\frac{m}{4}x_{\sigma }+\frac{1}{x_{\mu
}^{3}}h\left( \frac{x_{\nu }}{x_{\mu }},\frac{x_{\sigma }}{x_{\mu }}\right) $
\\ \hline
\end{tabular}%
$
\end{center}

For the remaining Case A3 we have that the force $F^{\mu }$ is the isotropic
oscillator, that is, ~$F^{\mu }=\left( \omega x^{\mu }+c^{\mu }\right)
\partial _{\mu }$ where $\omega ,~c^{\mu }$ are constants.

In order to demonstrate the use of the above tables let us require the
equations of motion of a Newtonian dynamical system which is invariant under
the $sl(2,R)$ algebra. We know \cite{Leach1991} that $sl(2,R)$ is generated
by the following Lie symmetries%
\[
\partial _{t}~,~2t\partial _{t}+R\partial _{R}~,t^{2}\partial
_{t}+tR\partial _{R}.
\]%
From \textbf{\ }table 1 line 3 for $d=4$\ and from Table 2 line 2 we have
that the force must be of the form\textbf{\ }\cite{Tsam10}%
\begin{equation}
F=\left( \frac{1}{x_{\mu }^{3}}f\left( \frac{x_{\nu }}{x_{\mu }},\frac{%
x_{\sigma }}{x_{\mu }}\right) ,\frac{1}{x_{\mu }^{3}}g\left( \frac{x_{\nu }}{%
x_{\mu }},\frac{x_{\sigma }}{x_{\mu }}\right) ,\frac{1}{x_{\mu }^{3}}h\left(
\frac{x_{\nu }}{x_{\mu }},\frac{x_{\sigma }}{x_{\mu }}\right) \right)
\end{equation}%
hence the equations of motion of this system in Cartesian coordinates are:%
\begin{equation}
(\ddot{x},\ddot{y},\ddot{z})=\left( \frac{1}{x_{\mu }^{3}}f\left( \frac{%
x_{\nu }}{x_{\mu }},\frac{x_{\sigma }}{x_{\mu }}\right) ,\frac{1}{x_{\mu
}^{3}}g\left( \frac{x_{\nu }}{x_{\mu }},\frac{x_{\sigma }}{x_{\mu }}\right) ,%
\frac{1}{x_{\mu }^{3}}h\left( \frac{x_{\nu }}{x_{\mu }},\frac{x_{\sigma }}{%
x_{\mu }}\right) \right) .
\end{equation}%
Immediately we recognize that this dynamical system is the well known and
important generalized Kepler Ermakov system (see \cite{Leach1991}). A\
different representation of $sl(2,R)$ consists of the vectors\cite{Tsam10}
\[
\partial _{t}~,~\frac{1}{\sqrt{m}}e^{\pm t\sqrt{m}}\partial _{t}\pm e^{\pm t%
\sqrt{m}}R\partial _{R}
\]%
For this representation from table 2 line 4 we have%
\begin{equation}
F^{\prime }=-\frac{m}{4}\left( x_{\mu },x_{\nu },x_{\sigma }\right) +\left(
\frac{1}{x_{\mu }^{3}}f\left( \frac{x_{\nu }}{x_{\mu }},\frac{x_{\sigma }}{%
x_{\mu }}\right) ,\frac{1}{x_{\mu }^{3}}g\left( \frac{x_{\nu }}{x_{\mu }},%
\frac{x_{\sigma }}{x_{\mu }}\right) ,\frac{1}{x_{\mu }^{3}}h\left( \frac{%
x_{\nu }}{x_{\mu }},\frac{x_{\sigma }}{x_{\mu }}\right) \right)
\end{equation}%
which leads again to the autonomous Kepler Ermakov system.

\subsection{Lie point symmetries for conservative forces}

In this section we assume that the force is given by the potential ~$%
V=V\left( x^{\mu }\right) $ and repeat the calculations. Again we ignore the
linear combinations of Lie symmetries for each case. We state the results in
Tables 3 and 4.

\begin{center}
Table 3. Case A1: Lie point symmetries generated by the affine algebra
(conservative force)\bigskip

$%
\begin{tabular}{|l|l|l|l|}
\hline
{\small {\textbf{Lie /}V(x,y,z) }} & $\mathbf{d=0}$ & $\mathbf{d=2}$ & $%
\mathbf{d\neq 0,2}$ \\ \hline
$\frac{d}{2}t\partial _{t}+\partial _{\mu }$ & $c_{1}x_{\mu }+f\left( x_{\nu
},x_{\sigma }\right) $ & $e^{-2x_{\mu }}f\left( x_{\nu },x_{\sigma }\right) $
& $e^{-dx_{\mu }}f\left( x_{\nu },x_{\sigma }\right) $ \\ \hline
$\frac{d}{2}t\partial _{t}+\partial _{\theta _{\left( \mu \nu \right) }}$ & $%
\,c_{1}\theta _{\left( \mu \nu \right) }+f\left( r_{\left( \mu \nu \right)
},x_{\sigma }\right) $ & $e^{-2\theta _{\left( \mu \nu \right) }}f\left(
r_{\left( \mu \nu \right) },x_{\sigma }\right) $ & $e^{-d\theta _{\left( \mu
\nu \right) }}f\left( r_{\left( \mu \nu \right) },x_{\sigma }\right) $ \\
\hline
$\frac{d}{2}t\partial _{t}+R\partial _{R}$ & $x^{2}f\left( \frac{x_{\nu }}{%
x_{\mu }},\frac{x_{\sigma }}{x_{\mu }}\right) $ & $c_{1}\ln \left( x_{\mu
}\right) +f\left( \frac{x_{\nu }}{x_{\mu }},\frac{x_{\sigma }}{x_{\mu }}%
\right) $ & $x^{2-d}f\left( \frac{x_{\nu }}{x_{\mu }},\frac{x_{\sigma }}{%
x_{\mu }}\right) $ \\ \hline
$\frac{d}{2}t\partial _{t}+x_{\mu }\partial _{\mu }$ & $c_{1}x_{\mu
}^{2}+f\left( x_{\nu },x_{\sigma }\right) $ & $\nexists $ & $\nexists $ \\
\hline
$\frac{d}{2}t\partial _{t}+x_{\nu }\partial _{\mu }$ & $c_{1}x_{\mu
}+c_{2}\left( x_{\mu }^{2}+x_{\nu }^{2}\right) +f\left( x_{\sigma }\right) $
& $\nexists $ & $\nexists $ \\ \hline
\end{tabular}%
$

\bigskip

Table 4. Case A2: Lie point symmetries generated by the gradient homothetic
algebra (conservative force) \bigskip

$%
\begin{tabular}{|l|l||l|l|}
\hline
\textbf{Lie} & $\mathbf{V}\left( x,y,z\right) $ & \textbf{Lie } & $\mathbf{V}%
\left( x,y,z\right) $ \\ \hline
$t\partial _{\mu }$ & $c_{1}x_{\mu }+f\left( x_{\nu },x_{\sigma }\right) $ &
$e^{\pm t\sqrt{m}}\partial _{\mu }$ & $-\frac{m}{2}x_{\mu }^{2}+c_{1}x_{\mu
}+f\left( x_{\nu },x_{\sigma }\right) $ \\ \hline
$t^{2}\partial _{t}+tR\partial _{R}$ & $\frac{1}{x_{\mu }^{2}}f\left( \frac{%
x_{\nu }}{x_{\mu }},\frac{x_{\sigma }}{x_{\mu }}\right) $ & $\frac{1}{\sqrt{m%
}}e^{\pm t\sqrt{m}}\partial _{t}+e^{\pm t\sqrt{m}}R\partial _{R}$ & $-\frac{m%
}{8}\left( x_{\mu }^{2}+x_{\nu }^{2}+x_{\sigma }^{2}\right) +\frac{1}{x_{\mu
}^{2}}f\left( \frac{x_{\nu }}{x_{\mu }},\frac{x_{\sigma }}{x_{\mu }}\right) $
\\ \hline
\end{tabular}%
$
\end{center}

Case B1/B2. In this case the potential is of the form $V\left( x,y,z\right) =%
\frac{\omega ^{2}}{2}\left( x^{2}+y^{2}+z^{2}\right) +p\left( x+y+z\right) ~$%
where $\omega ,p$ are constants.

From Tables 3 and 4 we infer{\LARGE \ }that the isotropic oscillator admits
24 Lie point symmetries generating the $Sl\left( 5,R\right) $, as many as
the free particle \cite{Prince}.

\section{Three dimensional autonomous Newtonian systems\newline
which admit Noether point symmetries}

\label{Noether point symmetries1}

In this section using theorem \ref{The Noether Theorem} we determine all
autonomous Newtonian Hamiltonian systems with Lagrangian%
\begin{equation}
L=\frac{1}{2}\left( \dot{x}^{2}+\dot{y}^{2}+\dot{z}^{2}\right) -V\left(
x,y,z\right)  \label{NPC.09}
\end{equation}%
which admit a non-trivial Noether point symmetry.This problem has been
considered previously in \cite{Damianou Sophocleous 1999,Damianou2004},
however as we shall show the results in these works are not complete. We
note that the Lie symmetries of a conservative system are not necessarily
Noether symmetries. The inverse is of course true.

Before we continue we note that the homothetic algebra of the Euclidian 3d
space $E^{3}$ has dimension seven and consists of three gradient KVs $%
\partial _{\mu }~~$with gradient function $x_{\mu }$, three non-gradient KVs
$x_{\nu }\partial _{\mu }-x_{\mu }\partial _{\nu }$ generating the
rotational algebra $so\left( 3\right) ,~$and a gradient HV $H^{i}=R\partial
_{R}~$\ with gradient function $H=\frac{1}{2}R^{2}$ , where $R^{2}=x^{\mu
}x_{\mu }.$ According to theorem \ref{The Noether Theorem} we have to
consider the following cases.

\subsection{Case I: Noether symmetries generated from the homothetic algebra}

The Noether point symmetries generated from the homothetic algebra i.e. the
non-gradient $so(3)$ elements included, are shown in Table 5.

\begin{center}
Table 5: Noether point symmetries generated by the homothetic algebra\bigskip

\begin{tabular}{|l|l|}
\hline
\textbf{Noether Symmetry} & $\mathbf{V(x,y,z)}$ \\ \hline
$\partial _{\mu }$ & $-px_{\mu }+f\left( x^{\nu },x^{\sigma }\right) $ \\
\hline
$x_{\nu }\partial _{\mu }-x_{\mu }\partial _{\nu }$ & $-p\theta _{\left( \mu
\nu \right) }+f\left( r_{\left( \mu \nu \right) },x^{\sigma }\right) ~$ \\
\hline
$2t\partial _{t}+R\partial _{R}$ & $\frac{1}{R^{2}}f\left( \theta ,\phi
\right) ~$or $\frac{1}{x_{\mu }^{2}}f\left( \frac{x_{\nu }}{x_{\mu }},\frac{%
x_{\sigma }}{x_{\mu }}\right) $ \\ \hline
\end{tabular}
\end{center}

The corresponding Noether integrals are computed easily from relation (\ref%
{NPC.07}) of Corollary \ref{The Noether Integrals}. In appendix \ref%
{appendix1} in Table 8 and Table 9 we give a complete list of the potentials
resulting form the linear combinations of the elements of the homothetic
algebra.

\subsection{Case II: Noether point symmetries generated from the gradient
homothetic algebra}

The Noether symmetries generated from the gradient homothetic algebra are
listed in Table 6.

\begin{center}
Table 6: Noether point symmetries generated by the gradient homothetic
algebra\bigskip

\begin{tabular}{|l|l|}
\hline
\textbf{Noether Symmetry} & $\mathbf{V(x,y,z)~~/~~T}_{,tt}\mathbf{=mT}$ \\
\hline
$T\left( t\right) \partial _{\mu }$ & $-\frac{m}{2}x_{\mu }^{2}-px_{\mu
}+f\left( x_{\nu },x_{\sigma }\right) $ \\ \hline
$\left( 2\int T\left( t\right) dt\right) \partial _{t}+T\left( t\right)
R\partial _{R}$ & $-\frac{m}{8}R^{2}+\frac{1}{R^{2}}f\left( \theta ,\phi
\right) $ or~$-\frac{m}{8}R^{2}+\frac{1}{x_{\mu }^{2}}f\left( \frac{x_{\nu }%
}{x_{\mu }},\frac{x_{\sigma }}{x_{\mu }}\right) $ \\ \hline
\end{tabular}
\end{center}

As before the Noether integrals corresponding to these Noether point
symmetries are computed from relation (\ref{NPC.08}) of Corollary \ref{The
Noether Integrals}. In appendix \ref{appendix1} in Table 10 we give the
potential functions which result from the linear combinations of the
elements of the gradient homothetic algebra. From the Tables we infer that
the isotropic linear forced oscillator admits 12 Noether point symmetries,
as many as the free particle.

As it has been remarked above, the determination of the Noether point
symmetries admitted by an autonomous Newtonian Hamiltonian system has been
considered previously in \cite{Damianou2004}. Our results extend the results
of \cite{Damianou2004} and coincide with them if we set the constant $p=0.~$%
For example in page 12 case 1 and page 15 case 6 of \cite{Damianou2004} the
terms $-\frac{p}{a}x_{\mu }$ and $p\arctan \left( l\left( \theta ,\phi
\right) \right) $ are missing respectively. Furthermore the potential given
in page/line 12/1, 13/2, 13/3 of \cite{Damianou2004} admits Noether
symmetries only when \ $\lambda =0~$and $b_{1,2}\left( t\right) =const.$
This is due to the fact that the vectors given in \cite{Damianou2004} are
KVs and in order to have $b_{,t}\neq 0$ they must be given by Case II\ of
theorem 2 above, that is, the KVs must be gradient. However the KVs used are
linear combinations of translations and rotations which are non-gradient.

We remark that from the above results we are also able to give, without any
further calculations, the Lie and the Noether point symmetries of a
dynamical system 'moving' in a three dimensional flat space whose metric has
Lorenzian signature simply by taking one of the coordinates to be complex,
for example by setting $x^{1}=ix^{1}.$

\section{Motion on the two dimensional sphere}

\label{Motion on the two dimensional sphere}

\textbf{\ }\label{Noether in a space of constant curvature}

A first application of the results of section \ref{Noether point symmetries1}
is the determination of Lie and Noether point symmetries admitted by the
equations of motion of a Newtonian particle moving in a two dimensional
space of constant non-vanishing curvature.

  Before we continue it is useful to recall some facts
concerning spaces of constant curvature. Consider a $n+1$ dimensional  flat
space with fundamental form%
\[
ds^{2}=\sum_{a}c_{a}(dz^{a})^{2}\quad a=1,2...,n+1
\]%
where $c_{a}$ are real constants. The hypersurfaces defined by
\[
\sum_{a}c_{a}(dz^{a})^{2}=eR_{0}^{2}
\]%
where $R_{0}$\ is an arbitrary constant and $e=\pm 1$  are called
fundamental hyperquadrics of the space. When all
coefficients $c_{a}$ are positive the space is Euclidian  and  $e=+1$. In this case
there is one family of hyperquadrics which is the hyperspheres. In all
other cases (excluding the case when all $c_{a}$ 's are negative) there are
two families of hyperquadrics corresponding to the values  $%
e=+1$ \ and $e=-1.$ It has been shown that in all cases the hyperquadrics are spaces of constant
curvature (see \cite{Eisenhart} p202).

One way to work is to consider in the above results $R=$constant. However,
in order to demonstrate the application of theorem \ref{The Noether Theorem}
in practice, we choose to work in the standard way. We use spherical
coordinates which are natural in the case of spaces of constant curvature.

We consider an autonomous dynamical system moving in the two dimensional
sphere (Euclidian $\left( \varepsilon =1\right) ~$or Hyperbolic~$\left(
\varepsilon =-1\right) $) with Lagrangian \cite{Carinena}%
\begin{equation}
L\left( \phi ,\theta ,\dot{\phi},\dot{\theta}\right) =\frac{1}{2}\left( \dot{%
\phi}^{2}+\mathrm{Sinn}^{2}\phi ~\dot{\theta}^{2}\right) -V\left( \theta
,\phi \right)  \label{MCC.01}
\end{equation}%
where
\[
\mathrm{Sinn}\phi =\left\{
\begin{array}{cc}
\mathrm{\sin }\phi & \varepsilon =1 \\
\mathrm{\sinh }\phi & \varepsilon =-1%
\end{array}%
\right. ~,~\mathrm{Cosn}\phi =\left\{
\begin{array}{cc}
\cos \phi & \varepsilon =1 \\
\mathrm{\cosh }\phi & \varepsilon =-1.%
\end{array}%
\right. ~
\]

The equations of motion are%
\begin{eqnarray}
\ddot{\phi}-\mathrm{Sinn}\phi ~\mathrm{Cosn}\phi \mathrm{~}\dot{\theta}%
^{2}+V_{,\phi } &=&0  \label{MCC.02} \\
\ddot{\theta}+2\frac{\mathrm{Cosn}\phi }{\mathrm{Sinn}\phi }~\dot{\theta}%
\dot{\phi}+\frac{1}{\mathrm{Sinn}^{2}\phi }V_{,\theta } &=&0.  \label{MCC.03}
\end{eqnarray}

We note that the Lagrangian (\ref{MCC.01}) is of the form (\ref{NPC.02})
with the metric $g_{\mu \nu }~$to be the metric of a space of constant
curvature. Therefore theorem \ref{The Noether Theorem} applies and we use it
to find the potentials $V\left( \theta ,\phi \right) $ for which additional
Noether point symmetries, hence Noether integrals are admitted.

The homothetic algebra of a metric of spaces of constant curvature consists
only of non-gradient KVs (hence $\psi =0)$ as follows

(a) $\varepsilon =1~$ (Euclidian case)

\begin{equation}
CK_{e}^{1}=\sin \theta \partial _{\phi }+\cos \theta \cot \phi \partial
_{\theta },~CK_{e}^{2}=\cos \theta \partial _{\phi }-\sin \theta \cot \phi
\partial _{\theta },~CK_{e}^{3}=\partial _{\theta }  \label{MCC.03.1}
\end{equation}

(b) $\varepsilon =-1~$(Hyperbolic case)%
\begin{equation}
CK_{h}^{1}=\sin \theta \partial _{\phi }+\cos \theta ~\mathrm{\coth }\phi
\partial _{\theta },~CK^{2}=\cos \theta \partial _{\phi }-\sin \theta ~%
\mathrm{\coth }\phi \partial _{\theta },~CK^{3}=\partial _{\theta }.
\label{MCC.03.2}
\end{equation}

Because we have only non-gradient KVs, according to theorem \ref{The Noether
Theorem} only Case I survives. Therefore the Noether vectors and the Noether
function are%
\begin{equation}
\mathbf{X}=CK_{e,h}^{i}\partial _{i},~~f=pt
\end{equation}%
provided the potential satisfies the condition%
\begin{equation}
\mathcal{L}_{CK}V+p=0.  \label{MCC.04}
\end{equation}%
The first integrals given by (\ref{NPC.07}) are
\begin{equation}
\phi _{II}=-g_{ij}^{i}CK_{e,h}^{i}\dot{x}^{j}+pt  \label{MCC.05}
\end{equation}%
and are time dependent if $p\neq 0$.

\subsection{Noether Symmetries}

We consider two cases, the case $V(\theta ,\phi )=$constant which concerns
the geodesics of the space, and the case $V(\theta ,\phi )\neq $constant.

For the case of geodesics it has been shown \cite{Tsamparlis2010} that the
Noether point symmetries are the three elements of $so(3)$ with
corresponding Noether integrals%
\begin{eqnarray}
I_{CK_{e,h}^{1}} &=&\dot{\phi}\sin \theta +\dot{\theta}\cos \theta ~\mathrm{%
Sinn}\phi ~\mathrm{Cosn}\phi \\
I_{CK_{e,h}^{2}} &=&\dot{\phi}\cos \theta -\dot{\theta}\sin \theta ~\mathrm{%
Sinn}\phi ~\mathrm{Cosn}\phi \\
I_{CK_{e,h}^{3}} &=&\dot{\theta}~\mathrm{Sinn}^{2}\phi .
\end{eqnarray}

These integrals are in involution with the Hamiltonian hence the system is
Liouville integrable.

In the case $V(\theta ,\phi )\neq $constant we find the results of Table 7

\begin{center}
Table 7: Noether symmetries/Integrals and potentials for the Lagrangian (\ref%
{MCC.01})~

$%
\begin{tabular}{|l|l|l|}
\hline
\textbf{Noether~Symmetry} & $\mathbf{V}\left( \theta ,\phi \right) $ &
\textbf{Noether~Integral} \\ \hline
$CK_{e,h}^{1}$ & $F\left( \cos \theta ~\mathrm{Sinn}\phi \right) $ & $%
I_{CK_{e,h}^{1}}$ \\ \hline
$CK_{e,h}^{2}$ & $F\left( \sin \theta ~\mathrm{Sinn}\phi \right) $ & $%
I_{CK_{e,h}^{2}}$ \\ \hline
$CK_{e,h}^{3}$ & $F\left( \phi \right) $ & $I_{CK_{e,h}^{3}}$ \\ \hline
$aCK_{e,h}^{1}+bCK_{e,h}^{2}$ & $F\left( \frac{1+\tan ^{2}\theta }{\mathrm{%
Sinn}^{2}\phi ~\left( a-b\tan \theta \right) ^{2}}\right) $ & $%
aI_{CK_{e,h}^{1}}+bI_{CK_{e,h}^{2}}$ \\ \hline
$aCK_{e,h}^{1}+bCK_{e,h}^{3}$ & $F\left( a\cos \theta \mathrm{Sinn}\phi
-\varepsilon ~b~\mathrm{Cosn}\phi \right) $ & $%
aI_{CK_{e,h}^{1}}+bI_{CK_{e,h}^{3}}$ \\ \hline
$aCK_{e,h}^{2}+bCK_{e,h}^{3}$ & $F\left( a\sin \theta \mathrm{Sinn}\phi
-\varepsilon ~b~\mathrm{Cosn}\phi \right) $ & $%
aI_{CK_{e,h}^{2}}+bI_{CK_{e,h}^{3}}$ \\ \hline
$aCK_{e,h}^{1}+bCK_{e,h}^{2}+cCK_{e,h}^{3}$ & $F\left( \left( a\cos \theta
-b\sin \theta \right) ~\mathrm{Sinn}\phi -\varepsilon ~c~\mathrm{Cosn}\phi
\right) $ & \thinspace $%
aI_{CK_{e,h}^{1}}+bI_{CK_{e,h}^{2}}+cI_{CK_{e,h}^{3}} $ \\ \hline
\end{tabular}%
$
\end{center}

The first integrals which correspond to each potential of Table 7 are in
involution with the Hamiltonian and independent. Hence the corresponding
systems are integrable. From Table 7 we infer the following result.

\begin{proposition}
\label{prop CCs}A dynamical system with Lagrangian (\ref{MCC.01})~has one,
two or four Noether point symmetries hence Noether integrals.
\end{proposition}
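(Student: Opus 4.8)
The plan is to prove the statement not by staring at Table 7 row by row, but by reorganizing its content through the Lie algebra of the isometry group, so that the admissible counts $1,2,4$ emerge as the only possible dimensions. First I would invoke Theorem \ref{The Noether Theorem}: the Lagrangian (\ref{MCC.01}) has the form (\ref{NPC.02}) with $g_{\mu\nu}$ the metric of a two dimensional space of nonzero constant curvature, so every Noether point symmetry is generated by the homothetic algebra of that metric. A space of non-vanishing constant curvature admits no proper homothetic vector, so the homothetic algebra reduces to the three dimensional Killing algebra $\mathcal{G}$ — namely $so(3)$ for $\varepsilon=1$ and $so(2,1)$ for $\varepsilon=-1$ — spanned by $CK^{1}_{e,h},CK^{2}_{e,h},CK^{3}_{e,h}$. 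Hence only Case I of Theorem \ref{The Noether Theorem} can occur, with $\psi_{Y}=0$, and a Killing vector $Y$ furnishes a Noether symmetry exactly when the constraint (\ref{NPC.04}) holds as $\mathcal{L}_{Y}V=-p$ for a constant $p$.

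Next I would peel off the universal symmetry $\partial_{t}$ (with Noether integral $E$), which supplies the base count of one, and encode the remaining symmetries in the subspace
\[
\mathcal{S}=\{\,Y\in\mathcal{G}:\mathcal{L}_{Y}V\ \text{is constant}\,\}.
\]
A one line computation shows $\mathcal{S}$ is a subalgebra and that $\chi(Y):=\mathcal{L}_{Y}V$ is a linear functional vanishing on the derived algebra $[\mathcal{S},\mathcal{S}]$: for $Y_{1},Y_{2}\in\mathcal{S}$ one has $\mathcal{L}_{[Y_{1},Y_{2}]}V=\mathcal{L}_{Y_{1}}\mathcal{L}_{Y_{2}}V-\mathcal{L}_{Y_{2}}\mathcal{L}_{Y_{1}}V=0$ since the inner Lie derivatives are constants. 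The total number of Noether symmetries is then $1+\dim\mathcal{S}$, so the proposition is exactly the assertion that $\dim\mathcal{S}\in\{0,1,3\}$, together with the statement that $\dim\mathcal{S}=3$ forces $V=\mathrm{const}$.

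The cases $\dim\mathcal{S}=0,1,3$ are then quickly disposed of. For $\dim\mathcal{S}=3$ we have $\mathcal{S}=\mathcal{G}$, which is simple hence perfect, so $\chi$ vanishes on all of $[\mathcal{G},\mathcal{G}]=\mathcal{G}$; thus $\mathcal{L}_{Y}V=0$ for every Killing vector, and since the isometry group acts transitively on the surface, $V$ is constant — this is precisely the geodesic case, with the four symmetries $\partial_{t}$ and the three elements of $so(3)$. The value $\dim\mathcal{S}=1$ reproduces the single-generator and single-combination rows of Table 7 (two symmetries), and $\dim\mathcal{S}=0$ the generic potential (one symmetry).

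The hard part will be excluding $\dim\mathcal{S}=2$. For $\varepsilon=1$ this is immediate: $so(3)$ has no two dimensional subalgebra, since its only connected subgroups are $\{e\}$, the circle groups, and $SO(3)$, so $\mathcal{S}$ cannot be two dimensional. For $\varepsilon=-1$ the situation is genuinely more delicate, and I expect this to be where the real content of the proposition lies: $so(2,1)\cong sl(2,\mathbb{R})$ does admit a two dimensional Borel (affine) subalgebra $\langle X,Y\rangle$ with $[X,Y]=Y$, for which the character condition only forces $\chi(Y)=0$ while permitting $\mathcal{L}_{X}V$ to be a nonzero constant. One must therefore verify that no non-constant potential can be Noether invariant under such a two dimensional subalgebra — equivalently, that choosing, within a null combination row of Table 7, a special function $F$ (of $\ln$-type on a parabolic invariant) cannot smuggle in a second independent Killing–Noether symmetry. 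Settling this hyperbolic subcase, rather than the routine spherical one, is the step I would expect to demand the most care.
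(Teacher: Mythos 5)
Your reduction --- only Case I of Theorem \ref{The Noether Theorem} applies, the admissible Killing vectors form a subalgebra $\mathcal{S}$ on which $\chi(Y)=\mathcal{L}_{Y}V$ is a character vanishing on $[\mathcal{S},\mathcal{S}]$, and the total count is $1+\dim\mathcal{S}$ --- is a sharpened form of what the paper itself does: the paper's proof asserts outright that semisimplicity of the Killing algebra forces the number of KV-generated symmetries to be $0$, $1$ or $3$. Your treatment of $\varepsilon=1$ is complete and correct, since $so(3)$ indeed has no two-dimensional subalgebra.

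However, your proof stops exactly at the step that matters, and that step cannot be completed: the verification you defer for $\varepsilon=-1$ is false, not merely delicate. Take $P=CK_{h}^{1}+CK_{h}^{3}$ and $B=CK_{h}^{2}$ from (\ref{MCC.03.2}); the structure constants $\bar{C}_{21}^{3}=\bar{C}_{23}^{1}=\bar{C}_{31}^{2}=1$ give $[B,P]=P$, so $\langle B,P\rangle$ is a Borel subalgebra. The function $u=\cos\theta\,\sinh\phi+\cosh\phi$ is everywhere positive and satisfies $Pu=0$ and $Bu=u$ (direct computation, using $\coth\phi\,\sinh\phi=\cosh\phi$). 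Hence $V=\ln u$ is a non-constant potential with $\mathcal{L}_{P}V=0$ and $\mathcal{L}_{B}V=1$, so both $P$ (with $p=0$) and $B$ (with $p=-1$, gauge $G=-t$) satisfy condition (\ref{NPC.04}); on the other hand, for a general KV $Y=\alpha CK_{h}^{1}+\beta CK_{h}^{2}+\gamma CK_{h}^{3}$ one finds $\mathcal{L}_{Y}V=\beta+(\alpha-\gamma)\sin\theta\,\sinh\phi\,u^{-1}$, which is constant only when $\alpha=\gamma$. Thus $\dim\mathcal{S}=2$ and this system admits exactly \emph{three} Noether point symmetries; it is row five of Table 7 with $a=b=1$ and the special choice $F=\ln$, i.e.\ precisely the ``$\ln$-type on a parabolic invariant'' you flagged as the potential loophole. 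So your suspicion uncovers a genuine error rather than a hard lemma: the inference ``semisimple $\Rightarrow$ no two-dimensional subalgebra'' used in the paper's proof is valid for the compact form $so(3)$ but not for $so(2,1)\cong sl(2,\mathbb{R})$, the published proof is therefore incorrect for $\varepsilon=-1$, and Proposition \ref{prop CCs} should be restated as: one, two or four Noether point symmetries when $\varepsilon=1$, and one, two, three or four when $\varepsilon=-1$, the value three occurring exactly for the logarithmic potentials on a parabolic invariant.
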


\begin{proof}
For the case of the free particle we have the maximum number of four Noether
symmetries (the rotation group $so(3)$ plus the $\partial _{t}$). In the
case the potential is not constant the Noether symmetries are produced by
the non-gradient KVs with Lie algebra%
\[
\left[ X_{A},X_{B}\right] =C_{AB}^{C}X_{C}
\]%
where $C_{12}^{3}=C_{31}^{2}=~C_{23}^{1}=1$ for $\varepsilon =1~$and $\bar{C}%
_{21}^{3}=\bar{C}_{23}^{1}=\bar{C}_{31}^{2}=1$ for $\varepsilon =-1.~$%
Because the Noether point symmetries form a Lie algebra and the Lie algebra
of the KVs is semisimple the system will admit either none, one or three
Noether symmetries generated from the KVs. The case of three is when $%
V\left( \theta ,\phi \right) =V_{0}$ that is the case of geodesics,
therefore the Noether point symmetries will be (including $\partial _{t}$)
either one, two or four.
\end{proof}

We note that the two important potentials of Celestial Mechanics,\ that is $%
V_{1}=-\frac{\mathrm{Cosn}\phi }{\mathrm{Sinn}\phi }~,~V_{2}=\frac{1}{2}%
\frac{\mathrm{Sinn}^{2}\phi }{\mathrm{Cosn}^{2}\phi }~\ $which according to
Bertrand 's Theorem \cite{Carinena,Kozlov,Vozmi} \ produce closed orbits on
the sphere are included in Table 7. Hence the dynamical systems they define
are Liouville integrable via Noether point symmetries $~CK_{e,h}^{3}$. \ The
potential $V_{1}$ corresponds to the Newtonian Kepler potential and $V_{2}$
is the analogue of the harmonic oscillator. We also note that our results
contain those of \cite{Carinena} if we consider the correspondence\footnote{%
We thank one of the referees for bringing this reference to our attention.} $%
S_{k}(r)\rightarrow \sin \phi ,C_{k}(r)\rightarrow \cos \phi ,$ $\theta
\rightarrow \phi ,$ $v_{r}\rightarrow \dot{\phi},$ $v_{\phi }\rightarrow
\dot{\theta}.$

We emphasize that the potentials listed in Table 7 concern dynamical systems
with Lagrangian (\ref{MCC.01}) which are integrable via Noether point
symmetries. It is possible that there exist integrable Newtonian dynamical
systems for potentials not included in these Tables, for example systems
which admit only dynamical symmetries \cite{Sarlet,Kalotas} with integrals
quadratic in momenta~\cite{Crampin,Lundmark}. However these systems are not
integrable via Noether point symmetries.

\section{Conclusion}

\label{Conclusion}

We have determined the three dimensional Newtonian dynamical systems which
admit Lie point symmetries and the three dimensional Hamiltonian Newtonian
dynamical systems which admit Noether point symmetries. These results
complete previous results \cite{Damianou Sophocleous 1999,Damianou2004}
concerning the Noether point symmetries of the three dimensional Newtonian
dynamical systems and extend our previous work on the two dimensional case
\cite{Tsam10,Sen}. We note that, due to the geometric derivation and the
tabular presentation, the results can be extended easily to higher
dimensional flat spaces, however at the cost of convenience because the
linear combinations of the symmetry vectors increase dramatically. In a
subsequent work, we shall apply the results obtained here to study the
integrability of the three dimensional Hamiltonian Kepler-Ermakov system
\cite{Leach1991} and generalize it in a Riemannian space.

\subsection*{Acknowledgement}

We would like to thank the anonymous referee for useful comments and
suggestions. This work has been partially supported from ELKE (grant 1112)
of the University of Athens.

\appendix{}

\section*{Appendix}

\label{appendix1}

Tables 8, 9 and 10 give the three dimensional potentials which admit Noether
point symmetries resulting from linear combinations of the elements of the
homothetic group.

\newpage 

\begin{center}
Table 8: Linear combinations of two vector fields for case I.\bigskip

\begin{tabular}{|l|l|}
\hline
\textbf{Noether Symmetry} & $\mathbf{V(x,y,z)}$ \\ \hline
$a\partial _{\mu }+b\partial _{\nu }$ & $-\frac{p}{a}x_{\mu }+f\left( x^{\nu
}-\frac{b}{a}x^{\mu },x^{\sigma }\right) $ \\ \hline
$a\partial _{\mu }+b\left( x_{\nu }\partial _{\mu }-x_{\mu }\partial _{\nu
}\right) $ & $-\frac{p}{\left\vert b\right\vert }\arctan \left( \frac{%
\left\vert b\right\vert x_{\mu }}{\left\vert \left( a+bx_{\nu }\right)
\right\vert }\right) +f\left( \frac{1}{2}r_{\left( \mu \nu \right) }+\frac{a%
}{b}x^{\nu },x^{\sigma }\right) $ \\ \hline
$a\partial _{\mu }+b\left( x_{\sigma }\partial _{\nu }-x_{\nu }\partial
_{\sigma }\right) $ & $-\frac{p}{\left\vert b\right\vert }\theta _{\left(
\nu \sigma \right) }+f\left( r_{\left( \nu \sigma \right) },x^{\mu }-\frac{a%
}{b}\theta _{\left( \nu \sigma \right) }\right) $ \\ \hline
$a\left( x_{\nu }\partial _{\mu }-x_{\mu }\partial _{\nu }\right) +$ & $%
\frac{p}{a}\arctan \left( \frac{ax_{\nu }+bx_{\sigma }}{x_{\mu }\sqrt{%
a^{2}+b^{2}}}\right) +$ \\
$~~~+b\left( x_{\sigma }\partial _{\mu }-x_{\mu }\partial _{\sigma }\right) $
& \thinspace $+\frac{1}{a}f\left( x_{\sigma }-\frac{a}{b}x_{\nu },x_{\nu
}^{2}\left( 1-\left( \frac{a}{b}\right) ^{2}+\frac{2b}{a}\frac{x_{\sigma }}{%
x_{\nu }}\right) +x_{\mu }^{2}\right) $ \\ \hline
$2bt\partial _{t}+a\partial _{\mu }+bR\partial _{R}$ & $-p\frac{x_{\mu
}\left( 2a+bx_{\mu }\right) }{2\left( a+bx_{\mu }^{2}\right) }+\frac{1}{%
\left( a+bx_{\mu }^{2}\right) }f\left( \frac{x_{\nu }}{a+bx_{\mu }},\frac{%
x_{\sigma }}{a+bx_{\mu }}\right) $ \\ \hline
$2bt\partial _{t}+a\theta _{\left( \mu \nu \right) }\partial _{\theta
_{\left( \mu \nu \right) }}+bR\partial _{R}$ & $\frac{1}{r_{\left( \mu \nu
\right) }^{2}}f\left( \theta _{\left( \mu \nu \right) }-\frac{a}{b}\ln
r_{\left( \mu \nu \right) },\frac{x_{\sigma }}{r_{\left( \mu \nu \right) }}%
\right) $ \\ \hline
\end{tabular}

\bigskip

Table 9: Linear combination of three vector fields for case I.\bigskip

\begin{tabular}{|l|l|}
\hline
\textbf{Noether Symmetry} & $\mathbf{V(x,y,z)}$ \\ \hline
$a\partial _{\mu }+b\partial _{\nu }+c\partial _{\sigma }$ & $-\frac{p}{a}%
x_{\mu }+f\left( x^{\nu }-\frac{b}{a}x^{\mu },x^{\sigma }-\frac{c}{a}x^{\mu
}\right) $ \\ \hline
$a\partial _{\mu }+b\partial _{\nu }+c\left( x_{\nu }\partial _{\mu }-x_{\mu
}\partial _{\nu }\right) $ & $\frac{p}{\left\vert c\right\vert }\arctan
\left( \frac{\left( b-cx_{\mu }\right) }{\left\vert \left( a+cx_{\nu
}\right) \right\vert }\right) $ \\
& ~$+f\left( \frac{c}{2}r_{\left( \mu \nu \right) }-bx_{\mu }+ax_{\nu
},x_{\sigma }\right) $ \\ \hline
$a\partial _{\mu }+b\partial _{\nu }+c\left( x_{\sigma }\partial _{\mu
}-x_{\mu }\partial _{\sigma }\right) $ & $-\frac{p}{\left\vert c\right\vert }%
\arctan \left( \frac{\left\vert c\right\vert x_{\mu }}{\left\vert
a+cx_{\sigma }\right\vert }\right) $ \\
& $~+f\left( x_{\nu }-\frac{1}{\left\vert c\right\vert }\arctan \left( \frac{%
\left\vert c\right\vert x_{\mu }}{\left\vert a+cx_{\sigma }\right\vert }%
\right) ,\frac{1}{2}r_{\left( \mu \sigma \right) }-\frac{a}{c}x_{\sigma
}\right) $ \\ \hline
$a\partial _{\mu }+b\left( x_{\nu }\partial _{\mu }-x_{\mu }\partial _{\nu
}\right) +$ & $\frac{p}{\sqrt{b^{2}+c^{2}}}\arctan \left( \frac{\left(
ab+b^{2}x_{\nu }+bcx_{\sigma }\right) }{\left\vert bx_{\mu }\right\vert
\sqrt{b^{2}+c^{2}}}\right) +$ \\
$~~+c\left( x_{\sigma }\partial _{\mu }-x_{\mu }\partial _{\sigma }\right) $
& $+f\left( x_{\mu }^{2}+x_{\nu }^{2}\left( 1-\frac{c^{2}}{b^{2}}\right)
+\left( \frac{2a}{b}+\frac{2c}{b}x_{\sigma }\right) x_{\nu },x_{\sigma }-%
\frac{c}{b}x_{\nu }\right) $ \\ \hline
$so\left( 3\right) $ linear combination & $p\arctan \left( \lambda \left(
\theta ,\phi \right) \right) +$ \\
& $+~F\left( R,b\tan \theta \sin \phi +c\cos \phi -aM_{1}\right) $ \\ \hline
$2ct\partial _{t}+a\partial _{\mu }+b\theta _{\left( \nu \sigma \right)
}\partial _{\theta _{\left( \nu \sigma \right) }}+cR\partial _{R}$ & $\frac{1%
}{r_{\left( \nu \sigma \right) }^{2}}f\left( \theta _{\left( \nu \sigma
\right) }-\frac{b}{c}\ln r_{\left( \nu \sigma \right) },\frac{a+cx_{\mu }}{%
cr_{\left( \nu \sigma \right) }}\right) $ \\ \hline
$2lt\partial _{t}+\left( a\partial _{\mu }+b\partial _{\nu }+c\partial
_{\sigma }+lR\partial _{R}\right) $ & $-\frac{px\left( 2a+cx_{\mu }\right) }{%
2\left( a+cx_{\mu }\right) ^{2}}+\frac{1}{\left( a+lx_{\mu }\right) ^{2}}%
f\left( \frac{b+lx_{\nu }}{l\left( a+lx_{\mu }\right) },\frac{c+lx_{\sigma }%
}{l\left( a+lx_{\mu }\right) }\right) $ \\ \hline
\end{tabular}%
\\[0pt]
\end{center}

where in Table 10
\begin{eqnarray*}
\lambda \left( \phi ,\theta \right) &=&\left( \left( a^{2}+b^{2}\right) \cos
\phi -bc\tan \theta \sin \phi +cM_{1}\right) \times \\
&&\times \left\{ M_{2}\left[ -b^{2}M_{1}^{2}-2b\tan \theta \sin \phi
M_{1}-a^{2}\sin ^{2}\phi \tan ^{2}\theta \right] \right\} ^{-\frac{1}{2}}
\end{eqnarray*}%
and~$M_{1}=\frac{1}{\cos \theta }\sqrt{\sin ^{2}\phi \left( 2\cos ^{2}\theta
-1\right) }~,M_{2}=\sqrt{a^{2}+b^{2}+c^{2}}.$

\begin{center}
Table 10: Linear combination of vector fields for case II.\bigskip

\begin{tabular}{|l|l|}
\hline
\textbf{Noether Symmetry} & $\mathbf{V(x,y,z)~/~T}_{,tt}\mathbf{=mT}$ \\
\hline
$T\left( t\right) \left( a\partial _{\mu }+b\partial _{\nu }+c\partial
_{\sigma }\right) $ & $-\frac{m}{2a}R^{2}+f\left( x^{\nu }-\frac{b}{a}x^{\mu
},x^{\sigma }-\frac{c}{a}x^{\mu }\right) $ \\ \hline
$\left( 2l\int T\left( t\right) dt\right) \partial _{t}+$ & $\frac{1}{\left(
a+lx_{\mu }\right) ^{2}}f\left( \frac{b+lx_{\nu }}{l\left( a+lx_{\mu
}\right) },\frac{c+lx_{\sigma }}{l\left( a+lx_{\mu }\right) }\right) +$ \\
$~~+T\left( t\right) \left( a\partial _{\mu }+b\partial _{\nu }+c\partial
_{\sigma }+lR\partial _{R}\right) $ & $~-\frac{m}{8}\left( R^{2}+\frac{2a}{l}%
x_{\mu }+\frac{2c}{l}x_{\nu }+\frac{2b}{l}x_{\sigma }\right) $ \\ \hline
\end{tabular}
\end{center}

\end{document}